\newtheorem{observation}{Observation}
\newcommand{\N}{\mathbb{N}}
\newcommand{\set}[1]{ \left\{ #1 \right\} }
\newcommand{\binset}{\set{0,1}}
\newcommand{\Walk}[2]{\mathrm{walk}_{#1,#2}}
\newcommand{\CycleWalk}[2]{\mathrm{cwalk}_{#1,#2}}
\newcommand{\PCF}[3]{\mathrm{PC}^{#1}_{#2,#3}}
\newcommand{\Cycle}[1]{\mathrm{Cycle}_{#1}}
\newcommand{\Index}{\mathrm{Index}}
\newcommand{\ORIndex}{\mathrm{OR\text{-}Index}}
\newcommand{\disj}{v_\mathrm{\tt disj}}
\newcommand{\intersect}{v_\mathrm{\tt int}}
\begin{document}

\title{Pointer Chasing with Unlimited Interaction}
\author{Orr Fischer\inst{1}\orcidID{0009-0007-4197-015X} \thanks{Supported in part by the Israel Science Foundation, Grants No. 1042/22 and 800/22).}\and
Rotem Oshman\inst{2}\orcidID{0009-0007-5065-5557}\thanks{Supported in part by the Israel
Science Foundation, Grant No. 2801/20.}
 \and
Adi Ros{\'{e}}n\inst{3}\thanks{Most of the work by this author was done while with FILOFOCS, CNRS, Israel.}
\and
Tal Roth\inst{4}\orcidID{0009-0000-7094-8348} \thanks{Supported in part by the Israel
Science Foundation, Grant No. 2801/20.}\Envelope}
\authorrunning{O. Fischer et al.}
\institute{Bar Ilan University, Israel \\ \email{fischeo@biu.ac.il} 
\and
Tel Aviv university, Israel \\ \email{roshman@tau.ac.il}
\and
IRIF, CNRS and Université Paris Cité, France \\
\email{adiro@irif.fr} \and
Tel Aviv university, Israel \\ \email{roth1@mail.tau.ac.il}
}
\maketitle

\begin{abstract}

Pointer-chasing is a central problem in two-party communication complexity:
given input size $n$ and a parameter $k$,
the two players Alice and Bob are given functions $N_A, N_B: [n] \rightarrow [n]$, respectively,
and their goal is to compute the value of $p_k$, where $p_0 = 1$, $p_1 = N_A(p_0)$, $p_2 = N_B(p_1) = N_B(N_A(p_0))$, $p_3 = N_A(p_2) = N_A(N_B(N_A(p_0)))$
and so on, applying $N_A$ in even steps and $N_B$ in odd steps, for a total of $k$ steps.
In some versions of the problem, the final output is not $p_k$ itself, but rather some fixed function $f(p_k)$ of $p_k$.
It is trivial to solve the problem using $k$ communication rounds, with Alice speaking first, by simply ``chasing the function'' for $k$ steps.
Many works have studied the communication complexity of pointer chasing, although the focus has always been on protocols with $k-1$ communication rounds, or with $k$ rounds where Bob (the ``wrong player'') speaks first.
Many works have studied this setting giving sometimes tight or near-tight results.

In this paper we study the communication complexity of the pointer chasing problem when the interaction between the two players is unlimited, i.e., without any restriction on the number of rounds. Perhaps surprisingly, this question was not studied before, to the best of our knowledge.
Our main result is that the trivial $k$-round protocol is nearly tight (even) when the number of rounds is not restricted:
we give a lower bound of $\Omega(k \log (n/k))$ on the randomized communication complexity of the pointer chasing problem with unlimited interaction, and a somewhat stronger lower bound of 
$\Omega(k \log \log{k})$ for protocols with zero error.

 When combined with prior work, our results also give a nearly-tight bound on the communication complexity of protocols using at most $k-1$ rounds, across all regimes of $k$;
 for $k > \sqrt{n}$ there was previously a significant gap between the upper and lower bound.

        \keywords{Communication complexity \and 
                  Pointer chasing \and 
                  Lower bounds.}
\end{abstract}

\section{Introduction}
\label{sec:intro}
\emph{Pointer chasing} is a natural and well-studied problem in communication complexity, which  was used to demonstrate the inherent sequential nature of certain distributed tasks where the input is partitioned between a number of players.
In particular, it was used to
demonstrate 
that certain tasks must be solved  ``step by step'' and cannot be parallelized;
such tasks require a large number of back-and-forth communication rounds,
\footnote{See Section~\ref{sec:prelim} for a more formal definition of this and other notions in communication complexity.}
and therefore
limiting the number of rounds of communication between two parties may have a dramatic effect on  communication complexity.
Pointer chasing has found many applications, including a proof for the monotone constant-depth hierarchy for Boolean circuits~\cite{NisanW91}, 
and lower bounds in distributed computation~\cite{NanongkaiSP11}, streaming algorithms~\cite{FeigenbaumKMSZ08,GuruswamiO13,AssadiCK19}, Matroid intersections~\cite{Harvey08}, 
data structures~\cite{MiltersenNSW98,SenV08},
and more.

It is easiest to informally describe the pointer chasing problem using the terminology of graphs. For  integer numbers $n$ and $k \leq n$,
in the two-party $k$-step pointer chasing problem,
we have a directed bipartite graph
$G=(U,W,E)$, where $|U|=|W|=n$, $U \cap W = \emptyset$,
 and the out-degree of all nodes is exactly $1$. 
Starting from any node $u \in V$, there is a natural notion of a \emph{$k$-walk} starting at that node,
where we follow the outgoing edges for $k$ steps,
alternating between nodes in $U$ and in $W$.
In the  $k$-step pointer chasing problem,
Alice receives all the edges emanating  from nodes in $U$ and Bob receives all edges emanating from nodes in $W$, and the task is to
find the endpoint $v$ of a $k$-step walk starting from some fixed node $u \in U$ on Alice's side of the graph
 (or sometimes, to compute a function $f(v)$ of that node, for some function $f$ that is fixed in advance and known to both players). 
The trivial protocol for $k$-step pointer chasing requires $k$ rounds, with Alice speaking first: 
the players follow the $k$-walk by having each player in their turn announce the vertex to which the walk moves, with Alice speaking in odd turns (since she knows the edges going from left to right) and Bob speaking in even turns (since he knows the edges going from right to left).
The total communication is $O(k \log n)$ bits.
We remark that when $k = \Theta(n)$, the players can simply send each other their full inputs, and this requires
one round of communication and $O(n \log n)$ bits.

The pointer chasing problem was first introduced by Papadimitriou and Sipser~\cite{PapadimitriouS84}, who showed that  for the case of $k=2$, any one-round protocol requires exponentially more communication than the trivial two-round protocol.
The communication complexity of ($k$-step) pointer chasing for $(k-1)$-round protocols, or for $k$-round protocols where Bob speaks first, was subsequently studied in a number of works 
\cite{NisanW91,DammJS96,PonzioRV99,Klauck00,Yehudayoff20,MaoYZ24}. 
At present,
the state of the art for   {\em deterministic} communication protocols
is an upper bound of $O \left(n \log^{(k-1)}n + k \log n \right)$~\cite{DammJS96}
and a lower bound of $\Omega(n - k \log n)$~\cite{NisanW91},
which holds for any $k$.
For  {\em randomized} communication protocols,
the best known upper bound is
$O\left( \left(\frac{n}{k} + k \right) \log n \right)$~\cite{NisanW91}, and
the best lower bound is $\Omega(n/k+k)$~\cite{MaoYZ24}.
Crucially, these lower bounds apply to protocols with \emph{exactly} $k-1$ rounds where Alice speaks first,
or to protocols with \emph{exactly} $k$ rounds where Bob speaks first.
See Section~\ref{sec:related} for more details on these and other results.

Perhaps surprisingly, the communication complexity of the pointer chasing problem with {\em unlimited interaction}, i.e., without any upper or lower bound on the number of rounds, has, to the best of our knowledge, not been studied to date. 
In this work we give lower bounds on the communication complexity of the $k$-step pointer chasing problem when there is {\em no restriction on the number of rounds}. 
We give the following two main results:
\begin{itemize}
\item An $\Omega(k \log (n/k))$ lower bound on the expected communication of randomized protocols that have a constant but non-zero error probability, and
\item An $\Omega(k \log \log{k})$ lower bound on the expected communication of  zero-error randomized protocols.
\end{itemize}

As a by-product of our results, we also get new, essentially tight, results for the communication complexity of the  pointer chasing problem when the protocol is restricted to {\em at most} $k-1$ rounds and Alice speaks first, or {\em at most} $k$ rounds and Bob speaks first, for certain regimes of $k$ and $n$. Again, this is in contrast to all previous lower bounds results which were proved for the case when the protocol uses {\em exactly} this number of rounds. While this difference seems at first sight to be of no great consequence, some of the lower bounds mentioned above, such as those of~\cite{Klauck00} and~\cite{MaoYZ24}, do rely on this fact, and their proofs do not hold for the more general case.
Our results imply new, nearly-tight lower bounds for the regime
where 
 $k > \sqrt{n}$, closing
 a significant gap (when $k \gg \sqrt{n}$)
 between the upper bound of $O\left( \left(\frac{n}{k} + k \right) \log n \right)$~\cite{NisanW91} and the lower bound of $\Omega(n/k)$~\cite{MaoYZ24} (restated for the case of {\em at most} $k-1$ rounds; see Section~\ref{sec:related} for more details).

 Another by-product of our results is a negative answer to a question posed very recently~\cite{MaoYZ24}:
 it is conjectured in~\cite{MaoYZ24} that the upper bound of  $O\left( \left(\frac{n}{k} + k \right) \log n \right)$ from~\cite{NisanW91} is not tight for $k=\omega(\log n)$,
 and the $\log n$ factor can be removed (this upper bound is indeed not tight for $k=o(\log n)$).
 The lower bound that we prove in Theorem~\ref{thm:LB_sublinear_k} below shows that this is not the case for $k = \Theta\left( n^\delta \right)$, where $1/2 < \delta < 1$ is a constant: the factor of $\log n$ is inherent.

 Our results are proved using two simple reductions: for non-zero error protocols
 we reduce from the $\ORIndex$ problem,
 and for zero-error protocols
 we reduce from $\mathrm{Cycle}$ (we review these problems in Section~\ref{sec:prelim}).

\subsection{Related Work}
\label{sec:related}

The known upper and lower bounds on $k$-step pointer chasing are summarized in Tables~\ref{table:known_lower_bounds} and~\ref{table:known_upper_bounds} below.
In the tables we list upper and lower bounds for {deterministic protocols with worst-case error (listed as ``deterministic'' in the tables),
for {randomized} protocols,
and for {deterministic} protocols with \emph{distributional error} (listed as ``distributional'' in the tables),
where the protocol only needs to succeed with high probability over  inputs  drawn from some fixed distribution (in this case, typically the uniform distribution).
Distributional lower bounds are the strongest, as they imply both deterministic worst-case lower bounds and, by Yao's principle, randomized lower bounds with worst-case error.

We emphasize that the lower bounds listed in Table~\ref{table:known_lower_bounds} all apply only to protocol with \emph{exactly} the number of rounds listed in the table.
This may be confusing at first, as one might expect that a lower bound that applies to protocol with \emph{exactly} $R$ communication rounds would also apply to protocols with \emph{at most $R$} communication rounds, but in fact this is not necessarily the case.
The distinction is most significant when it comes to the lower bounds of~\cite{Klauck00} and~\cite{MaoYZ24}, where the proofs make explicit use of the fact that the protocol has exactly some number of rounds.
It is possible to adapt these proofs so that they apply to any protocol with \emph{at most $k$} communication rounds,
but this comes at the cost of $k$ bits in the lower bound, which becomes $\Omega(n/k)$ in both cases.
Therefore, when $k \gg \sqrt{n}$, the lower bounds of~\cite{Klauck00} and of~\cite{MaoYZ24} are significantly weaker than the lower bounds of $\tilde{\Omega}(k)$ that we prove in the present paper.

From a technical perspective, one reason that prior work has hit an obstacle at $k = \sqrt{n}$ is that most of it has worked with   the \emph{uniform input distribution}, where the outgoing edge of each node in the bipartite graph goes to a uniformly random node on the other side.
It is not hard to see that under that distribution within the first $k \approx \sqrt{n}$ steps the walk enters a \emph{cycle},
which means that the players do not need to ``keep walking'' and may instead reuse the information that they have already learned, without further communication.
Thus, proving lower bounds that grow with $k$ when $k \geq \sqrt{n}$ requires a new idea.

\begin{table}[h!]
\centering
\caption{Related work: lower bounds}
\label{table:known_lower_bounds}
\begin{tabular}{|c|c|c|} 
 \hline
 Paper & lower bound & Comments \\  
 \hline
\cite{PapadimitriouS84}  & $\Omega(n)$ & $k=2$, deterministic one-way\\ 
 \hline
\cite{DurisGS84}  & $\Omega(n/k^2)$ & deterministic, $k-1$ rounds\\ 
 \hline
\cite{NisanW91}  & $\Omega(n - k\log n)$ & deterministic, $k$ rounds (Bob speaks first)\\
                 & $\Omega((n/k^2) - k\log n)$ & randomized, $k$ rounds (Bob speaks first) \\
 \hline
\cite{PonzioRV99} & $\Omega(n \log^{(k-1)}n)$ & constant $k$, deterministic, $k$-rounds (Bob speaks first)\\ 
 \hline
\cite{Klauck00} &  $\Omega((n/k) + k)$ & randomized, $k$ rounds (Bob speaks first) \\ 
 \hline
\cite{Yehudayoff20} & $\Omega((n/k) - k\log n)$ & distributional, $k$ rounds (Bob speaks first) \\ 
 \hline
\cite{MaoYZ24} &  $\Omega((n/k) +k)$ & distributional, $k-1$ rounds 
\\ 
 \hline
\end{tabular}
\end{table}

\begin{table}[h!]
\centering
\caption{Related work: upper bounds}
\label{table:known_upper_bounds}
\begin{tabular}{|c|c|c|} 
 \hline
 Paper & Upper bound & Comments \\  
 \hline
 Trivial protocol & $O(k \log n)$ & $k$ rounds (Alice speaks first), deterministic \\ 
 \hline
 \cite{DammJS96} & $O \left(n \log^{(k-1)}n + k \log n \right)$ & $k$ rounds (Bob speaks first),  deterministic \\
 \hline
 \cite{NisanW91} & $O\left( \left(\frac{n}{k} + k \right) \log n \right)$ & $k-1$ rounds, randomized \\
 \hline
\end{tabular}
\end{table}

\subsection{Other Models}
In addition to the standard two-party communication model,
pointer chasing was also studied in the quantum communication model \cite{Klauck00,KlauckNTZ01,JainRS02,JainRS02b} and in the multiparty number-on-the-forehead model \cite{NisanW91,DammJS96,Gronemeier06,ViolaW09,Chakrabarti07,BrodyC08,Brody09,Jastrzebski14,Liang14,BrodyS15}, where it found applications in lower bounds for set disjointness and circuit complexity, respectively.

\subsection{Our Results}
\label{sec:results}
Our work focuses on showing  $\widetilde{\Omega}(k)$ lower bounds on the randomized communication complexity of $k$-step pointer chasing
for protocols with an unlimited number of rounds,
which serves to show that the trivial $k$-round $O(k \log n)$-bit protocol described above is essentially tight.
Of course,  lower bounds with no restriction on rounds immediately imply lower bounds for any protocol with restricted rounds.

For protocols with constant, non-zero error, we show:
\begin{theorem}[Informal]
    Any randomized protocol that solves $k$-step pointer chasing 
    with constant (non-zero) error
    must send $\Omega(k \log (n/k))$
    bits in expectation,
    even for the weaker version of the problem where the goal is merely to compute some (non-constant) function $f(v)$ where $v$ is the end of a $k$-walk starting from some fixed vertex.
    \label{thm:const_error}
\end{theorem}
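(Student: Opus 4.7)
My approach is to reduce from the $\ORIndex$ problem (as reviewed in Section~\ref{sec:prelim}) using a layered construction of pointer-chasing inputs. The key property I rely on is that $\ORIndex$, suitably parameterized with $m = \Theta(k)$ copies and universe size $N = \Theta(n/k)$, has a randomized two-way communication complexity lower bound of $\Omega(k \log(n/k))$ against constant-error protocols, which follows from a direct-sum argument on the information complexity of the underlying $\Index$ problem, using the per-copy bound $\IC(\Index_N) = \Omega(\log N)$.

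To embed $\ORIndex$ into pointer chasing I build a \emph{layered} graph whose walk visits one vertex per layer and hence avoids cycles even when $k > \sqrt{n}$---which is exactly the obstruction that blocked previous arguments. Partition $U$ into layers $U_0, U_1, \ldots, U_{k/2}$ and $W$ into $W_0, \ldots, W_{k/2-1}$, each identified with $[N] \times \binset$. Given an $\ORIndex$ instance where Alice holds strings $x^{(1)}, \ldots, x^{(m)} \in \binset^N$ and Bob holds indices $\iota_1, \ldots, \iota_m \in [N]$, set $N_A((a,b)) = (a,\; b \xor x^{(j)}[a])$ for every $(a,b) \in U_j$, so that crossing $U_j$ XORs in the bit of Alice's $j$-th string at the currently-addressed coordinate. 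Correspondingly, set $N_B((a,b)) = (\iota_j,\; b) \in U_{j+1}$ for every $(a,b) \in W_j$, overwriting the first coordinate with Bob's next index while preserving the running XOR. Starting the walk at $p_0 = (0,0) \in U_0$ and unwinding, one verifies that $p_k \in U_{k/2}$ has second coordinate equal to $\bigoplus_{j=1}^{m} x^{(j)}[\iota_{j-1}]$, precisely the $\ORIndex$ output (up to a trivial reindexing and after setting any ``wasted'' initial string to $0^N$). Choosing $f(v)$ to be the second coordinate of $v$---a fixed non-constant function---any protocol computing $f(p_k)$ with expected cost $c$ and constant error immediately yields an $\ORIndex$ protocol of the same cost and error, so $c = \Omega(k \log(n/k))$.

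The main technical step is the $\Omega(m \log N)$ lower bound for $\ORIndex$: this requires showing $\IC(\Index_N) = \Omega(\log N)$ from first principles (essentially, for Alice to output a bit correlated with $x_\iota$ she must receive enough information about $\iota$, or conversely) and invoking the direct-sum theorem for information complexity over $m$ independent instances. The main subtlety in the reduction itself is that the rules $(a,b) \mapsto (a, b \xor x^{(j)}[a])$ and $(a,b) \mapsto (\iota_j, b)$ already define legitimate functions on every vertex of $U_j$ and $W_j$, so no ``filler'' edges need to be chosen and the resulting instance is a bona fide pointer-chasing input; a minor accounting step is then needed to pad layer sizes so that $|U| = |W| = n$ exactly. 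Preservation of the error probability is immediate because the pointer-chasing protocol's output is, by construction, equal to the $\ORIndex$ answer.
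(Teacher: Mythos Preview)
Your high-level strategy---reduce from $\ORIndex_{r,m}$ with $m=\Theta(k)$ and $r=\Theta(n/k)$ into a layered pointer-chasing instance---is exactly the paper's approach. However, there is a genuine gap in the execution: your edge rule $N_A((a,b)) = (a,\, b \xor x^{(j)}[a])$ accumulates the \emph{XOR} of the indexed bits, and you yourself compute that the second coordinate of $p_k$ equals $\bigoplus_j x^{(j)}[\iota_{j-1}]$. This is not the $\ORIndex$ value $\bigvee_j (y_j)_{x_j}$, so the reduction does not solve $\ORIndex$ and the cited $\Omega(m\log r)$ lower bound (Lemma~\ref{lem:lower_or_index}) does not apply. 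Your sketched justification (``a direct-sum argument on the information complexity of $\Index$'') does not repair this: information-complexity direct sum lower-bounds the cost of outputting \emph{all} $m$ copies, not their XOR---nor, for that matter, their OR, which is why the $\ORIndex$ bound in~\cite{Patrascu11} needs a more delicate argument exploiting the OR structure through a distribution concentrated on all-zero instances.

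The fix is immediate: replace $\xor$ by $\vee$ in Alice's rule, so the second coordinate carries the running OR. With that change your construction becomes a close variant of the paper's: the paper routes the walk to an absorbing vertex $\intersect$ the first time an indexed bit equals $1$ and otherwise advances to the next gadget, while your corrected version always advances through the layers but records in a state bit whether a $1$ has been seen. Both correctly reduce from $\ORIndex$ and invoke the same lower bound. One further small point: the formal statement (Theorem~\ref{thm:LB_sublinear_k}) asserts the bound for \emph{every} non-trivial $f$, whereas you fix $f$ to be the second-coordinate projection. To handle arbitrary $f$ you must add a final routing step sending the two possible endpoints to vertices $u,v$ with $f(u)\neq f(v)$; this is precisely the role of $\disj,\intersect$ in the paper's construction.
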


This lower bound is nearly tight, given the trivial $k$-round $O(k \log n)$-bit protocol. 
Moreover, the theorem shows that when $k \geq \sqrt{n}$, restricting the number of rounds to be less than $k$ does not matter very much: even if we do not limit the number of rounds, a protocol will have to expend $\Omega(k \log (n/k))$ bits of communication to solve $k$-step pointer chasing, and when $k \geq \sqrt{n}$ this very nearly matches the $(k-1)$-round protocol from~\cite{NisanW91}, which sends $O( (n/k)\log n + k \log n) = O(k \log n)$ bits.
This contrasts sharply with the case $k \ll \sqrt{n}$, where prior work has shown that  $(k-1)$-round protocols have significantly higher communication complexity compared to $k$-round protocols.

Theorem~\ref{thm:const_error} trivially implies a lower bound for protocols that are limited to at most $k-1$ communication rounds. In addition, as noted in Section~\ref{sec:related}, the lower bounds from~\cite{Klauck00,MaoYZ24} can be adapted to apply to protocols with at most $k - 1$ rounds, yielding a lower bound of $\Omega(n/k)$. 
We can thus combine these two lower bounds to obtain the following:
\begin{corollary}[Informal]
For any balanced function $f$ (e.g., the parity function),
any randomized constant-error protocol for $k$-step pointer chasing that uses at most $k - 1$ rounds must send $\Omega( (n/k) + k\log(n/k) )$ bits in expectation.
\end{corollary}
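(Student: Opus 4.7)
The plan is to obtain the corollary by additively combining two complementary lower bounds, each of which is strong in a different regime of $k$ relative to $\sqrt{n}$. The first bound is immediate: Theorem~\ref{thm:const_error} gives an $\Omega(k \log (n/k))$ lower bound on the expected communication of any constant-error randomized protocol computing $f(p_k)$ for any non-constant (in particular, balanced) $f$, \emph{without any restriction on the number of rounds}. Since a protocol using at most $k-1$ rounds is a special case, this bound applies verbatim to the round-restricted setting.

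The second bound comes from the existing round-restricted lower bounds of~\cite{Klauck00} and~\cite{MaoYZ24}. These are stated for protocols using \emph{exactly} $k-1$ rounds, but as noted in Section~\ref{sec:related}, their proofs can be adapted to protocols with \emph{at most} $k-1$ rounds at the cost of an additive $k$ term, yielding $\Omega(n/k)$. I would verify that these proofs remain valid when the output is a balanced function $f(p_k)$ (such as parity of the index of $p_k$) rather than $p_k$ itself. Under the standard hard distribution used in these works (essentially uniform functions $N_A,N_B$), the endpoint $p_k$ is nearly uniform in $[n]$ conditioned on the players' views unless enough information has been communicated, and so any balanced $f(p_k)$ inherits the same hardness up to constant factors.

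Having established that both $\Omega(k \log(n/k))$ and $\Omega(n/k)$ are valid lower bounds on the expected communication of a constant-error protocol using at most $k-1$ rounds, the corollary follows from the elementary inequality $\max(a,b) \geq (a+b)/2$, which gives the combined lower bound of $\Omega((n/k) + k \log(n/k))$.

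The main obstacle I anticipate is the second step, specifically checking that the adaptation of \cite{Klauck00, MaoYZ24} to at-most-$(k-1)$-round protocols truly extends to balanced functions of $p_k$ rather than $p_k$ itself. The round-padding argument (appending empty messages to reach exactly $k-1$ rounds) is routine, but one should double-check that the hard distribution and the information-theoretic arguments in those proofs indeed yield $\Omega(n/k)$ lower bounds for outputting even a single balanced bit of $p_k$; otherwise a mild modification of the reduction (e.g., asking the players to output the parity of $p_k$, which is near-uniform under the hard distribution) would be needed to close the gap.
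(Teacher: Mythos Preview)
Your proposal is correct and follows essentially the same approach as the paper: the paper derives the corollary in exactly this way, by combining Theorem~\ref{thm:const_error} (which already holds with unlimited interaction, hence for at most $k-1$ rounds) with the adaptation of the~\cite{Klauck00,MaoYZ24} bounds to the at-most-$(k-1)$-round setting, which loses the additive $k$ term and yields $\Omega(n/k)$. The paper does not spell out the verification that the~\cite{Klauck00,MaoYZ24} bounds apply to balanced Boolean outputs any more than you do, so your identified ``obstacle'' is not something the paper resolves in greater detail.
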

\noindent This nearly matches the $O\left( \left( n / k + k \right) \log n \right)$ protocol of \cite{NisanW91} across all regimes of $k$.

\medskip 

For \emph{zero-error} randomized protocols,
where the output must always be correct,
we prove a slightly stronger lower bound:
\begin{theorem}
   Any zero-error randomized protocol that solves $k$-step pointer chasing 
    must send $\Omega(k \log \log k)$
    bits in expectation,
    even for the weaker version of the problem where the goal is merely to compute some (non-constant) function $f(v)$ where $v$ is the end of a $k$-walk starting from some fixed vertex.
    \label{thm:zero_error}
\end{theorem}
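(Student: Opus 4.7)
The plan is to reduce from $k$ independent copies of the $\Cycle{m}$ problem, using the known lower bound of $\Omega(\log \log m)$ on the expected zero-error randomized communication complexity of a single copy. Taking $m = \Theta(k)$, the target bound of $\Omega(k \log \log k)$ becomes the cost of ``paying'' for $k$ cycle instances, one per walk step.

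Concretely, I would partition $U$ and $W$ into $k$ layers of size $\Theta(k)$ each and associate the $i$-th layer with the $i$-th cycle instance. Vertices in layer $i$ belong to Alice's side or Bob's side depending on the parity of $i$, matching the player whose edges determine step $i$ of the walk. The edges are set so that, starting from a fixed vertex $p_0$, the walk position after step $i$ deterministically encodes the concatenation of the first $i$ cycle answers, where the $i$-th answer is a function of the two players' inputs to the $i$-th instance (the ``active'' player encodes her input in the outgoing edges out of layer $i-1$, and the ``passive'' player's input is encoded by a labeling of the vertices at layer $i$). The function $f(p_k)$ output by the protocol is chosen to be a balanced bit of this concatenation (for example, the parity of the $k$ answers), so that a zero-error pointer-chasing protocol must, with probability one, correctly compute every individual cycle answer.

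The heart of the argument is to show that any zero-error pointer-chasing protocol $\pi$ with expected communication $T$ yields a zero-error protocol that simultaneously solves all $k$ embedded cycle instances, and that this in turn forces $T = \Omega(k \log \log k)$. I would prove this by a direct-sum-style accounting: slice the transcript of $\pi$ into $k$ consecutive stages corresponding to the walk steps, and show that, in expectation over a product distribution on the $k$ inputs, the communication spent at stage $i$ must resolve the $i$-th cycle instance conditioned on the first $i-1$ answers. The main obstacle is that zero-error Las Vegas protocols have variable-length transcripts and cannot simply be ``restarted'' at each stage; I expect to handle this via an information-cost chain rule that exploits the independence of the $k$ inputs, or alternatively, if the single-copy lower bound for $\Cycle{\cdot}$ is proved via a combinatorial covering or rank argument, by a direct counting argument over the monochromatic rectangles induced by the sliced transcripts.
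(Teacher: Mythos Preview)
Your proposal rests on a misquoted bound: the Raz--Spieker lower bound for $\Cycle{m}$ is $\Omega(m\log\log m)$, not $\Omega(\log\log m)$ (each player's input is a perfect matching on $2m$ vertices, so a $\log\log m$ bound would be trivial). Thus a \emph{single} instance of $\Cycle{\Theta(k)}$ already carries the full $\Omega(k\log\log k)$ cost, and no direct-sum argument is needed---which is fortunate, since direct sum for zero-error randomized complexity is not a tool you can invoke off the shelf, and the information-cost route you mention does not apply to zero-error protocols. More fatally, the layered embedding cannot work as described: one pointer-chasing step applies only one player's function, whereas the answer to a $\Cycle{m}$ instance is a global property of \emph{both} matchings that cannot be extracted from $O(1)$ edge traversals. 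And if the walk position after step $i$ is to encode the concatenation of $i$ one-bit answers, you need $2^i$ vertices per layer, forcing $n\ge 2^k$ and restricting the result to $k\le\log n$.

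The paper's reduction is entirely different and much simpler: it embeds \emph{one} instance of $\Cycle{k'}$ (with $k'$ a prime in $(k,2k]$, obtained by first padding the cycle through $u_1$) into $2k'$-step pointer chasing by taking $N_A,N_B$ to be the two matchings themselves on $[k']$. A $2k'$-walk from vertex $1$ returns to $1$ iff the length $2\ell$ of the cycle containing $u_1$ divides $2k'$; since $k'$ is prime this forces $\ell\in\{1,k'\}$, and the degenerate case $\ell=1$ is ruled out with $O(\log k)$ bits of preliminary communication. Choosing $f$ so that $f(1)=0$ and $f(i)=1$ for $2\le i\le 2k'$, the value $f(\Walk{A}{2k'}(1))$ then decides Hamiltonicity, and the Raz--Spieker bound gives $\Omega(k\log\log k)$ directly.
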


\section{Preliminaries}
\label{sec:prelim}
Throughout this paper, we denote $[n] = \set{1,\ldots,n}$.

\subsection{Two-Party Communication Complexity}
In two-party communication complexity, there are two players, Alice and Bob, each with a private input $X,Y$ (resp.).
The goal is for one of the two parties, specified in advance, to output the value of some function $F(X,Y)$ of their inputs, using as little communication as possible.
To that end, the parties engage in a \emph{communication protocol},
where they communicate back-and-forth for some number of rounds,
until eventually the party responsible for producing an output does so.
\footnote{It is common to require that \emph{both} parties learn the value of $F(X,Y)$, and this is without loss of generality (up to one bit of communication) when $F$ is a Boolean function and the number of rounds is unrestricted: the party that learns $F(X,Y)$ can simply send it to the other party. However, when the number of rounds is restricted, requiring both parties to learn $F(X,Y)$ can significantly increase the communication required.}

A communication protocol may be deterministic, or it may be randomized; in this paper we consider \emph{randomized public-coin} protocols, 
where both players have access to a shared uniformly-random string of arbitrarily large length. The \emph{worst-case communication cost}
of the protocol is the total number of bits exchanged between the players, in the worst case over inputs and random strings. The \emph{expected communication cost} of the protocol is the worst-case over inputs $X,Y$ of the expected number of total number of bits exchanged between the players (where the expectation is taken over the randomness).

A protocol is said to compute $F$ with  error $0 \le \epsilon \le 1$ if on any input $X,Y$,
the probability that the protocol produces the correct output $F(X,Y)$
is at least $1 - \epsilon$ (where the probability is taken over the randomness).
In the special case where $\epsilon = 0$, we refer to this protocol as a \emph{randomized zero-error} protocol for $F$.
The \emph{randomized $\epsilon$-error communication complexity} of $F$
is the minimum expected communication cost of 
any protocol that computes $F$ with error $\epsilon$. 

The \emph{round complexity} of a protocol is  the worst-case number of messages exchanged between the two parties, speaking in alternating order. Either Alice or Bob may speak first (this must be specified by the protocol).

\subsection{Pointer Chasing}

The pointer chasing problem is defined by applying two functions, each given as input to one of Alice or Bob,  in alternating order and for some fixed number of steps $k$.

\noindent This alternation is formally captured by the following definition:

\begin{definition} [Walk functions] \label{def:walks}
    Let $n \in \N$, and $N_A,N_B \in [n]^n$.
    For any $r \in \N$,
    we define functions $\Walk{A}{r}$, $\Walk{B}{r} \in [n]^n$ with respect to $N_A, N_B$
    via mutual recursion.
    For any $i \in [n]$:
    \begin{equation*}
        \Walk{A}{r}(i)
        =
        \begin{cases}
            i                      & r = 0 \\
            \Walk{B}{r-1}(N_A(i))  & r > 0,
        \end{cases}
    \end{equation*}
    and symmetrically:
    \begin{equation*}
        \Walk{B}{r}(i)
        =
        \begin{cases}
            i                      & r = 0 \\
            \Walk{A}{r-1}(N_B(i))  & r > 0.
        \end{cases}
    \end{equation*}
\end{definition}
In order to solve the pointer chasing problem, the parties need to output the value $\Walk{A}{k}(1)$ reached after $k$ steps starting from value $1$, or more generally, compute a predetermined function $f$ of this value:
\begin{definition} [Pointer chasing problem] \label{def:PC}
    Let $n, k \in \N$,
    and let $f : [n] \rightarrow \set{0,1}$.
    In the \emph{$k$-step pointer chasing problem} $\PCF{f}{n}{k}$,
    Alice and Bob are given functions $N_A, N_B \in [n]^n$, respectively,
    and both parties know the function $f$.
    Their goal is to compute 
    \begin{equation*}
        \PCF{f}{n}{k}(N_A,N_B)
        \;=\;
        f( \Walk{A}{k}(1)). 
    \end{equation*}
\end{definition}

Throughout the paper, we consider a class of functions we refer to as \emph{non-trivial}, which is a slightly stronger condition than non-constant.

\begin{definition}
    $f:[n]\rightarrow\{0,1\}$ is \emph{non-trivial} if $f$ is non-constant on $[n]\setminus \set{1}$, or in other words, if there exist $i,j \in [n] \setminus \{1\}$ such that $f(i) \neq f(j)$. 
\end{definition}

\paragraph{Graph-theoretic formulation.}
For our purposes it is convenient to use an equivalent, graph-theoretic 
definition of the pointer chasing problem.
Let $G = (L, R, E)$ be a balanced directed bipartite graph where $|L| = |R| = \set{1,\ldots,n}$, and each vertex has out-degree 1.
If we think of $N_A$ (and resp.\ $N_B$) as the
edges going from $L$ to $R$
(resp.\ from $R$ to $L$),
then it is easy to see that $\Walk{A}{r}(1)$ (resp. $\Walk{B}{r}(1)$) returns the label of the vertex at the end of the unique $r$-step \textit{walk} in $G$ starting from the vertex labeled $1$ on the left side $L$ (resp.\ the right side $R$).
Thus, we can think of $\PCF{f}{n}{k}$
as the problem where Alice is given the edges $N_A$ from left to right, Bob is given the edges $N_B$ from right to left, and the goal is to compute the function $f$ of the vertex reached by a length-$k$ walk starting from vertex $1$ on the left side.

\subsection{The Problems $\Index$ and $\ORIndex$}

We use in our proofs   a reduction from the $\ORIndex$ function defined below.

\begin{definition}[The $\Index$ problem \cite{NK97}] 
Let $x \in [r]$ and $y \in \binset^r$.
The $\Index_r$ function is defined as:
\begin{equation*}
    \Index_{r}(x,y)
    =
    y_x.
\end{equation*}
\end{definition}
The $\ORIndex$ problem is simply the disjunction of $m$ instances of the Index function:
\begin{definition}[The $\ORIndex$ problem]
Let $x \in [r]^m$, and $y \in \left(\binset^r\right)^m$.
The $\ORIndex_{r,m}$ function is defined as
\begin{equation*}
    \ORIndex_{r,m}(x,y)
    =
    \bigvee_{i=1}^m 
    \left( 
        y_i
    \right)_{x_i}.
\end{equation*}
\end{definition}

In~\cite{Patrascu11} the following lower bound for $\ORIndex$ (which is called \emph{blocky lopsided disjointness} in~\cite{Patrascu11}) is given:
\begin{lemma} [\cite{Patrascu11} 
\label{lem:lower_or_index}
Theorem 1.4, restated and simplified]
    The randomized public-coin communication complexity
    of $\ORIndex_{r,m}$ with error $\frac{1}{9999}$ 
    is $\Omega(m \log r)$.
\end{lemma}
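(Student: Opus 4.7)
The plan is an information-complexity argument in the style of Bar-Yossef, Jayram, Kumar, and Sivakumar (BJKS): combine a per-block $\Omega(\log r)$ information-cost bound for $\Index_r$ with a direct-sum decomposition across the $m$ blocks of $\ORIndex_{r,m}$. The hard distribution is a product $\nu^m$, where each block $(\rv{x}_i, \rv{y}_i)$ is sampled i.i.d.\ from a single-copy distribution $\nu$ under which the block's Index output is $0$ with probability $1 - \Theta(1/m)$, while the overall OR output is bounded away from both $0$ and $1$. A natural choice is $\rv{x}_i$ uniform in $[r]$ with each coordinate of $\rv{y}_i$ i.i.d.\ $\mathrm{Ber}(c/m)$ for a small constant $c$; the fact that most blocks have output $0$ is crucial for the $\vee$ combining function to transmit information about a single embedded block.

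Next, I would establish a single-block information-cost lower bound $\IC_\nu(\Index_r) = \Omega(\log r)$. The intuition is that from Bob's viewpoint the coordinates of $\rv{y}$ are symmetric and essentially interchangeable: if the posterior support of $\rv{x}$ given the transcript $\rv{\Pi}$ has size $t$, then any guess Bob makes for $\rv{y}_{\rv{x}}$ is correct with probability at most $1/2 + O(1/\sqrt{t})$, by a binomial-concentration argument on $\{y_s\}$ restricted to the support. Forcing this above $1 - \epsilon$ requires $t = O(1)$, which by Fano forces $I(\rv{x}; \rv{\Pi}) \geq \log r - O(1)$; passing from external to internal information cost is routine, as $\nu$ is a product distribution.

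The direct-sum step embeds a single instance into a random block $J$ of $\ORIndex_{r,m}$ using public coins to simulate the other $m - 1$ blocks from $\nu$. Since those blocks report $0$ with constant probability, the OR output reliably equals the embedded block's Index value, so any protocol $\pi$ for $\ORIndex_{r,m}$ on $\nu^m$ with small constant error induces a protocol for $\Index_r$ on $\nu$ with essentially the same error. Averaging the per-block information cost over $J$ yields
\[
    \IC_{\nu^m}(\pi) \;\geq\; m \cdot \IC_\nu(\Index_r) \;=\; \Omega(m \log r),
\]
and since (internal) information cost is a lower bound on expected communication, the claim follows.

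The main obstacle is the single-block $\Omega(\log r)$ information-cost lower bound itself. A naive Fano argument on the one-bit output only yields $\Omega(1)$, and upgrading this to $\log r$ requires a careful exploitation of the symmetric, independent structure of $\rv{y}$ — roughly, that Alice must effectively ``name'' one out of $r$ essentially indistinguishable coordinates to Bob. A cleaner alternative is to first prove the one-way (Alice-to-Bob) lower bound of $\Omega(\log r)$ for $\Index_r$, which follows by a standard rectangle/corruption-style argument under the uniform distribution, and then bootstrap it to a two-way internal-information bound via a message-compression argument.
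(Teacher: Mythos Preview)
The paper does not prove this lemma at all: it is quoted verbatim from P\u{a}tra\c{s}cu's work and used as a black box. So there is no ``paper's proof'' to compare against; the relevant question is whether your sketch actually establishes the bound.

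It does not, and the gap is in the single-block step. You need the per-block distribution $\nu$ to satisfy two incompatible requirements. For the OR-embedding to work you take $\rv{y}_i$ with i.i.d.\ $\mathrm{Ber}(c/m)$ entries, so that the other $m-1$ blocks are all zero with constant probability. But under \emph{that} $\nu$, the protocol ``output $0$'' has error $c/m$ and zero information cost; there is no $\Omega(\log r)$ per-block IC lower bound to sum. Your binomial-concentration claim that Bob's guess is correct with probability at most $1/2 + O(1/\sqrt{t})$ tacitly assumes the coordinates of $\rv{y}$ are $\mathrm{Ber}(1/2)$, directly contradicting the bias $c/m$ you chose. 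In short, the very skew that makes the OR-embedding succeed is what kills the single-copy bound.

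The standard repair is the BJKS \emph{conditional} information-cost framework: measure information under a collapsing distribution supported entirely on $0$-inputs, but demand correctness on \emph{all} inputs (in particular on the $1$-inputs outside the support of $\nu$). One then argues that any protocol that is correct when $(y_i)_{x_i}=1$ must, even on $0$-inputs, pin down $x_i$ up to $O(1)$ candidates --- otherwise an adversary can flip an undetected coordinate to $1$ without changing the transcript. This yields the $\Omega(\log r)$ per block and sums cleanly. Alternatively, P\u{a}tra\c{s}cu's own proof of the cited theorem proceeds via a richness/rectangle argument for lopsided set disjointness rather than information complexity; either route works, but your sketch as written does neither.
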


\subsection{The $\Cycle{n}$ problem}

In \cite{RS95}, Raz and Spieker defined the following problem, which we refer to as the $\Cycle{n}$ problem:
\begin{definition}
Fix an integer $n \in \N$ and two vertex sets $U,W$ of size $|U| = |W| = n$.
In the $\Cycle{n}$ problem,
Alice and Bob are given perfect matchings $P_A, P_B$ (resp.) in the complete bipartite graph on $U \cup W$.
The goal is for the players to determine whether $P_A \cup P_B$ is a Hamiltonian cycle on $U \cup W$.
\end{definition}

Note that in the (multi)-graph $G = (U \cup W,P_A \cup P_B)$, each vertex has degree exactly $2$, and therefore $G$ is a non-empty collection of cycles (possibly including degenerate cycles of length $2$).
This characterization will be useful to us later.

Raz and Spieker showed in~\cite{RS95} (Theorem 1) that the nondeterministic communication complexity
\footnote{In \emph{nondeterministic communication complexity}, there is a prover whose goal is to convince the two parties to output 1. We do not give a formal definition here, as it is not needed for our purposes. See \cite{NK97,RY20} for the formal definition.}
of the $\Cycle{n}$ problem is $\Omega(n\log\log{n})$.
Nondeterministic communication complexity is a \emph{lower bound} on zero-error randomized communication complexity (see \cite{NK97} Proposition 3.7), and the following statement immediately follows:

\begin{lemma}[Corollary of \cite{RS95} Theorem 1]
\label{cor:cycle_complexity} The randomized zero-error communication complexity of $\Cycle{n}$ is $\Omega(n \log\log{n})$.
\end{lemma}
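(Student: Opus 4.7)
The plan is to derive this bound as an immediate consequence of the two facts stated in the paragraph preceding the lemma, so essentially no new technical work is required: the statement is a straightforward corollary rather than an independent theorem.

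First, I would invoke Theorem~1 of \cite{RS95} as a black box: it asserts that the nondeterministic communication complexity of $\Cycle{n}$ is $\Omega(n\log\log n)$. Reproving this is well outside the scope of the present paper (the Raz--Spieker argument is a combinatorial/algebraic analysis of perfect matchings via the matching structure of bipartite graphs), so I would cite it verbatim.

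Second, I would apply the standard translation, formalized as Proposition~3.7 of \cite{NK97}, between nondeterministic and zero-error randomized complexity. This proposition says that for every Boolean function $F$, the expected communication cost of any randomized \emph{zero-error} (Las Vegas) protocol computing $F$ is at least the nondeterministic communication complexity of $F$, up to constant factors. Instantiating this with $F = \Cycle{n}$ and chaining it with the Raz--Spieker bound yields the claimed $\Omega(n\log\log n)$ lower bound on the randomized zero-error communication complexity of $\Cycle{n}$.

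The only point that warrants a sanity check is that the notion of nondeterministic complexity used in \cite{RS95} coincides, up to constants, with the one used in Proposition~3.7 of \cite{NK97} (namely: a prover sends a witness that both players then verify with a deterministic protocol, and only YES-instances need to be certified). This amounts to a minor bookkeeping check, since both references use the standard Karchmer--Wigderson style definition. The main ``obstacle'' in the proof of the lemma as stated is therefore purely conceptual — one has to confirm that ``randomized zero-error communication complexity'' in the lemma matches the Las Vegas expected-cost model defined in Section~\ref{sec:prelim}, which is precisely the quantity lower-bounded by Proposition~3.7 of \cite{NK97}. With these identifications in place, the proof consists of one line combining the two cited results.
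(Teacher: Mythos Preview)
Your proposal is correct and matches the paper's approach exactly: the paper simply cites \cite{RS95} Theorem~1 for the $\Omega(n\log\log n)$ nondeterministic lower bound and then invokes \cite{NK97} Proposition~3.7 to transfer it to zero-error randomized complexity, with no further argument. Your additional sanity checks about matching definitions are harmless but not needed for the paper's purposes.
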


\section{Lower Bounds}

\subsection{Lower bound for Constant Error Randomized Protocols}
\label{sec:sublinear_k}
In this section we prove our main lower bound on the communication complexity of $k$-step pointer chasing with unlimited interaction:

\begin{theorem} [Formal statement of Theorem~\ref{thm:const_error}]
    Let $k,n \in \N$ be such that $k \leq n$,
    and let $f:[n] \rightarrow \binset$ be non-trivial.
    Then
    there is a constant $\epsilon \in (0,1)$
    such that
    the randomized $\epsilon$-error
    communication complexity of
    $\PCF{f}{n}{k}$ 
    is
    $\Omega(k \log(n/k))$.
    \label{thm:LB_sublinear_k}
\end{theorem}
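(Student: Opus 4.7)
The plan is to prove Theorem~\ref{thm:LB_sublinear_k} by a direct reduction from $\ORIndex_{r,m}$ with $m = \lfloor k/2 \rfloor$ and $r = \Theta(n/k)$, so that Lemma~\ref{lem:lower_or_index} immediately yields a lower bound of $\Omega(m \log r) = \Omega(k \log(n/k))$. The high-level idea is that each $\Index_r$ sub-instance is encoded into two consecutive steps of the pointer-chasing walk: an Alice step that ``sends'' her index $x_i$ by moving to one of $r$ designated vertices on Bob's side, followed by a Bob step that either advances the walk to the next index block (if $y_i[x_i] = 0$) or diverts it to a \emph{success} vertex from which it never escapes (if $y_i[x_i] = 1$).

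Concretely, given $\ORIndex_{r,m}$ inputs $x \in [r]^m$ (Alice) and $y \in (\binset^r)^m$ (Bob), the players build the pointer-chasing graph locally, without communication. On Alice's side $L$ reserve vertices $\alpha_1,\dots,\alpha_{m+1}$ with $\alpha_1 = 1$ so the walk starts at $\alpha_1$, plus a success vertex $s_L$; on Bob's side $R$ reserve vertices $\beta_{i,j}$ for $i \in [m]$, $j \in [r]$, plus a success vertex $s_R$. This uses $m+2$ labels on the left and $mr+1$ on the right, both at most $n$ provided $r \le (n-1)/m$. Define
\begin{align*}
N_A(\alpha_i) &= \beta_{i,x_i} \text{ for } i \in [m], \\
N_A(s_L) &= s_R, \\
N_B(\beta_{i,j}) &= \begin{cases} s_L & \text{if } y_i[j]=1,\\ \alpha_{i+1} & \text{if } y_i[j]=0,\end{cases} \\
N_B(s_R) &= s_L,
\end{align*}
with arbitrary defaults elsewhere; the walk from $\alpha_1$ never reaches those vertices. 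A one-line induction on $i$ shows that after $2i$ steps the walk is at $\alpha_{i+1}$ if $y_1[x_1] = \cdots = y_i[x_i] = 0$ and at $s_L$ otherwise, so after $2m$ steps the walk sits at $s_L$ iff $\ORIndex(x,y) = 1$.

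Since $f$ is non-trivial, pick $a,b \in [n] \setminus \{1\}$ with $f(a) \neq f(b)$ and name the two possible endpoints accordingly: for $k = 2m$ the endpoints lie in $L$ and are $s_L$ and $\alpha_{m+1}$, so set these labels to $a$ and $b$; for $k = 2m+1$ one more Alice step carries the walk to $s_R$ or to $N_A(\alpha_{m+1})$, both in $R$, which we relabel as $a,b$ instead. Either way $f$ of the walk's endpoint equals $\ORIndex(x,y)$ up to a fixed bit-flip, so any randomized protocol for $\PCF{f}{n}{k}$ with expected cost $C$ and error $\eps$ yields a protocol for $\ORIndex_{r,m}$ with the same error and cost; taking $\eps = 1/9999$ from Lemma~\ref{lem:lower_or_index} gives $C = \Omega(m \log r) = \Omega(k \log(n/k))$. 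The one thing to watch, and essentially the only real obstacle, is the bookkeeping that keeps vertex $1$ identified with $\alpha_1$ while forcing the $f$-distinguishing vertices into $[n]\setminus\{1\}$, together with the small parity-of-$k$ case split above. There is no hidden information-theoretic step, because the quantitative bound falls out of Lemma~\ref{lem:lower_or_index} once the embedding is set up and the parameters $m = \Theta(k)$, $r = \Theta(n/k)$ are verified.
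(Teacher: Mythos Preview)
Your proposal is correct and follows essentially the same approach as the paper: both reduce from $\ORIndex_{r,m}$ with $m = \Theta(k)$ and $r = \Theta(n/k)$, encoding each $\Index$ sub-instance into two consecutive pointer-chasing steps (an Alice step selecting one of $r$ vertices via $x_i$, then a Bob step that either advances to the next block or diverts to an absorbing ``success'' vertex depending on $y_i[x_i]$), and then invoke Lemma~\ref{lem:lower_or_index}. The only cosmetic difference is that the paper places self-loops at both terminal vertices $\disj,\intersect$ so that any $k \geq 2m$ works uniformly, whereas you take $m = \lfloor k/2 \rfloor$ exactly and handle the parity of $k$ with a small case split; this is equivalent bookkeeping.
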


If we take $k = O(n^{\delta})$ for some constant $\delta \in (0,1)$,
then the bound that we obtain from Theorem~\ref{thm:LB_sublinear_k}
is $\Omega(k \log n)$,
which matches the na\"ive protocol described in Section~\ref{sec:intro}.

\subsubsection{Proof Overview.}
We show a reduction from  $\ORIndex_{r,m}$ to $\PCF{f}{n}{k}$, where we take $r,m$ such that $n = \Theta(rm)$ and $k = \Theta(m)$. 
Since $\ORIndex_{r,m}$ requires $\Omega(m \log r)$ bits of communication~\cite{Patrascu11},
this yields a lower bound of $\Omega(k \log(n/k))$ for $\PCF{f}{n}{k}$.

Recall that $\ORIndex_{r,m}$ comprises $m$ instances of $\Index_r$,
and the goal is to determine whether at least one of them evaluates to 1.
For convenience, denote these instances $\Index^{(1)}_r,\dots,\Index^{(m)}_r$, and the value of the $i$th instance as $\Index^{(i)}_r(X,Y)$.
We construct  edge sets $N_A, N_B$ such that $\PCF{f}{n}{k}(N_A, N_B) = 1$
iff there is at least one $i$ such that $\Index^{(i)}_r(X,Y) = 1$.

We describe the construction in graph-theoretic terms (see the explanation in Section~\ref{sec:prelim} above).
The graph that we construct is made up of $r$ gadgets, each associated with one instance $\Index^{(i)}_r$.
In addition, the graph contains two special vertices $\disj,\intersect$, which are not part of any gadget.
The gadgets are constructed such that
for each $i = 1,\ldots,m$,
if $\Index^{(i)}_r(X,Y) = 1$ then
a two-step walk on the $i$th gadget leads
to vertex $\intersect$;
but 
if $\Index^{(i)}_r(X,Y) = 0$,
then the two-step walk on the $i$th gadget leads to the next gadget if $i < m$,
or to vertex $\disj$ if $i = m$ (i.e., this is the last gadget).
Consequently, we can show that
$\ORIndex_{r,m}(X,Y) = 0$ if and only if $\Walk{A}{k}(1) = \disj$, and that $\ORIndex_{r,m}(X,Y) = 1$ if and only if $\Walk{A}{k}(1) = \intersect$.

\subsubsection{The Reduction.}

Fix parameters $m \geq 1, r \geq 2$.
We construct a protocol for
$\ORIndex_{r,m}$
by reduction to $\PCF{f}{n}{k}$,
where $f$ is any non-trivial function,
and $n,k \in \N$ satisfy
\begin{enumerate}
    \item $rm+2 \leq n$, and
    \item $2m \leq k$.
\end{enumerate}

Recall that in the $\ORIndex$ function,
Alice receives $X \in [r]^m$ and Bob receives $Y \in \left( \binset^r \right)^m$.
We construct a pointer-chasing instance as follows.
Let $\disj$, $\intersect \in [n]\setminus \set{1}$ be two vertices such that $f(\disj) \neq f(\intersect)$.
(We know that such vertices exist, because $f$ is non-trivial, i.e., non-constant on $[n]\setminus \set{1}$.)
We assume w.l.o.g.\ 
that $\disj = rm+1$ and $\intersect = rm+2$,
as otherwise we can re-order the vertices.

The edges $N_A$ going from left to right in the pointer chasing instance are as follows.
\begin{itemize}
    \item For any $i \in [m]$, we set $N_A(r(i-1) + 1)
    =
     r(i-1) + x_i$.
     \item  $N_A(\disj) = \disj$, and $N_A(\intersect) = \intersect$.
     \item For any vertex $s \in [n]$ such that $N_A(s)$ is not defined by the other cases, we set $N_A(s) = 1$ (this is an arbitrary choice, and any other value would work just as well).
\end{itemize}

The edges $N_B$ going from right to left in the pointer chasing instance are as follows.
\begin{itemize}
    \item For any $i \in [m]$ and $j \in [r]$:
\begin{equation*}
    N_B(r(i-1) + j)
    =
    \begin{cases}
        \intersect & \text{ if } \left( y_i \right)_j = 1 \\
        ri+1       & \text{ if } \left( y_i \right)_j = 0,
    \end{cases}
\end{equation*}
    \item  $N_B(\disj) = \disj$, and $N_B(\intersect) = \intersect$.
    \item For any vertex $s \in [n]$ such that $N_B(s)$ is not defined by the other cases, we set $N_B(s) = 1$ (again, an arbitrary value).
\end{itemize}

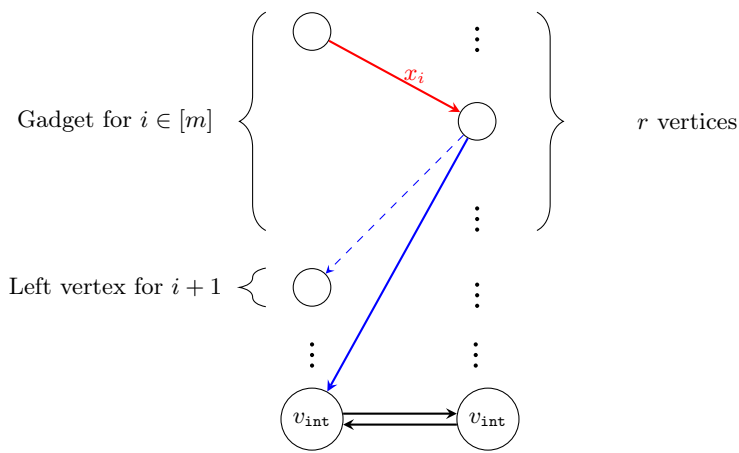
\begin{figure}[h]
\begin{tikzpicture}[
	vertex/.style={circle, draw=black, minimum size=5mm},
	hiddenNode/.style={circle, minimum size=5mm},
	arrow/.style={thick,->,>=stealth},
	weakArrow/.style={dashed,->,>=stealth},
	node distance=5mm and 15mm,
        ]
\node[vertex] (AliceI)  {};
\node[hiddenNode] (dotsI)       [right=of AliceI] {\huge $\vdots$};
\node[vertex] (BobI1)       [below=of dotsI] {};
\node[hiddenNode] (dotsI2)       [below=of BobI1] {\huge $\vdots$};
\node[hiddenNode] (HiddenNode) [left=of dotsI2] {};
\node[hiddenNode] (upperHiddenNodeI)     [left=1mm of AliceI] {};
\node[hiddenNode] (lowerHiddenNodeI)     [left=1mm of HiddenNode] {};
\node[hiddenNode] (upperRightHiddenNodeI)     [right=1mm of dotsI] {};
\node[hiddenNode] (lowerRightHiddenNodeI)     [right=1mm of dotsI2] {};

\node[vertex] (AliceIPlus1) [below=of HiddenNode] {};
\node[hiddenNode] (upperHiddenNodePlus1)     [left=1mm of AliceIPlus1] {};
\node[hiddenNode] (DotsPlus1)  [right=of AliceIPlus1] {\huge $\vdots$};

\node[hiddenNode] (lowerDots1)     [below=1mm of AliceIPlus1] {\huge $\vdots$};
\node[hiddenNode] (lowerDots2)     [right=13mm of lowerDots1] {\huge $\vdots$};

\node[vertex] (IntersectA)  [below=1mm of lowerDots1] {$\intersect$};
\node[vertex] (IntersectB)  [right=of IntersectA] {$\intersect$};

\draw[arrow, red] (AliceI) -- node[anchor=west] {$x_i$} (BobI1) ;
\draw[weakArrow, blue] (BobI1) -- node[anchor=east] {} (AliceIPlus1) ;
\draw[arrow, blue] (BobI1) -- node[anchor=west] {} (IntersectA) ;
\draw[arrow] (IntersectA.10) -- (IntersectB.170) ;
\draw[arrow] (IntersectB.190) -- (IntersectA.350) ;

\draw [decorate,decoration={brace,amplitude=10pt}]  (lowerHiddenNodeI.south) -- (upperHiddenNodeI.north)
node [black,midway,xshift=-2cm] {Gadget for $i \in [m]$};

\draw [decorate,decoration={brace,amplitude=10pt}]  (upperRightHiddenNodeI.north) -- (lowerRightHiddenNodeI.south)
node [black,midway,xshift=2cm] {$r$ vertices};

\draw [decorate,decoration={brace,amplitude=10pt}]  (upperHiddenNodePlus1.south) -- (upperHiddenNodePlus1.north)
node [black,midway,xshift=-2cm] {Left vertex for $i+1$};

\end{tikzpicture}
\caption{The left vertex and $x_i$-th right vertex of the gadget for coordinate $i \in [m]$, together with the left vertex of the gadget for coordinate $i+1$, and $\intersect$ vertices.
Each gadget has one left vertex and $r$ right vertices.
The left vertex of the gadget for coordinate $i$ is connected to the $x_i$-the right vertex.
For any $j \in [r]$, the $j$-th right vertex of the gadget for coordinate $i$ is connected to the $\intersect$ vertex if $\left(y_i\right)_j = 1$, and otherwise to the left vertex of the gadget for coordinate $i+1$.  
The edges between the $\intersect$ vertices always appear.}
\end{figure}

Observe that Alice can construct $N_A$ and Bob can construct $N_B$ from their respective inputs without communication.
Alice and Bob now solve the pointer-chasing instance $(N_A, N_B)$
by calling a protocol $\Pi$
for $\PCF{f}{n}{k}$,
and return $\ORIndex(x,y) = 0$ iff $\PCF{f}{n}{k}(N_A,N_B) = f(\disj)$.

\subsubsection{Analysis.}
Let $\Pi'$ be the protocol constructed above for $\ORIndex_{r,m}$.
Clearly, the communication complexity of $\Pi'$
is the same as that of the protocol $\Pi$ for $\PCF{f}{n}{k}$ (possibly plus one bit, depending on the party we want to output the function's value). We show that $\Pi'$ succeeds whenever $\Pi$ 
succeeds.

\begin{lemma} \label{lemma:sublinear_k_2_step_walk}
    Let $i \in [m]$.
    Then a two-step walk starting from vertex $r(i-1) + 1$ at the left side will end at vertex $ri+1$ if $\left(y_i\right)_{x_i}= 0$, and otherwise will end at the vertex $\intersect$.
    More formally, for any $i\geq 1$,
    \begin{equation*}
        \Walk{A}{2}(r(i-1) + 1)
        =
        \begin{cases}
            ri+1       & \text{ if }\left(y_i\right)_{x_i}= 0\\
            \intersect & \text{ otherwise } .
        \end{cases}
    \end{equation*}
\end{lemma}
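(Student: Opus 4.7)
The proof is a direct unfolding of Definition~\ref{def:walks} together with the explicit edge definitions used in the reduction, so I would keep it short and purely computational.

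The plan is first to expand $\Walk{A}{2}(r(i-1)+1)$ according to the recursive definition: since $r > 0$, we have $\Walk{A}{2}(r(i-1)+1) = \Walk{B}{1}(N_A(r(i-1)+1))$, and unfolding one more level gives $\Walk{A}{2}(r(i-1)+1) = \Walk{A}{0}(N_B(N_A(r(i-1)+1))) = N_B(N_A(r(i-1)+1))$, using that $\Walk{A}{0}$ is the identity. So the entire lemma reduces to evaluating the composition $N_B \circ N_A$ at $r(i-1)+1$.

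Next I would plug in the construction. By the first bullet in the definition of $N_A$, we have $N_A(r(i-1)+1) = r(i-1)+x_i$. Since $x_i \in [r]$, this lands inside the $i$th block on the right side, so the value $r(i-1)+x_i$ is in the range covered by the first bullet in the definition of $N_B$ (with the indices $i \in [m]$ and $j = x_i \in [r]$). Reading off that case and splitting on $(y_i)_{x_i}$:
\begin{equation*}
    N_B(r(i-1) + x_i)
    \;=\;
    \begin{cases}
        \intersect & \text{if } (y_i)_{x_i} = 1, \\
        ri + 1     & \text{if } (y_i)_{x_i} = 0.
    \end{cases}
\end{equation*}
Combining this with the previous paragraph yields exactly the statement of the lemma.

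The only thing to be careful about is confirming that the value $r(i-1)+x_i$ is not captured by the catch-all case $N_B(s) = 1$ in the third bullet, i.e.\ that it really falls under the first bullet of the definition of $N_B$. This is immediate because $r(i-1)+x_i = r(i-1)+j$ for $j = x_i \in [r]$ and $i \in [m]$, precisely matching the range for which the first bullet applies. Hence there is no genuine obstacle; the lemma is essentially a one-line unfolding of the walk definition against the two explicit rules that define $N_A$ and $N_B$ on the relevant vertices.
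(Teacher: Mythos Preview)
Your proof is correct and follows exactly the approach indicated in the paper, which simply says the claim ``follows immediately from the definition of $N_A(r(i-1)+1)$ and $N_B(r(i-1)+j)$.'' You have made this explicit by unfolding $\Walk{A}{2}$ to $N_B \circ N_A$ and checking that the relevant bullets in the definitions of $N_A$ and $N_B$ apply, which is precisely the intended one-line argument spelled out in detail.
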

\begin{proof}
    Follows immediately from the definition of $N_A(r(i-1)+1)$ and $N_B(r(i-1)+j)$.
\end{proof}

By induction on the length of the walk,
we can immediately deduce from Lemma~\ref{lemma:sublinear_k_2_step_walk}:
\begin{corollary}
\label{cor:sublinear_k_2_many_step_disjoint_walk}
    Let $0 \leq \ell \leq m$ be such that $\left(y_i\right)_{x_i}= 0$ for all $1 \leq i \leq \ell$.
    Then $\Walk{A}{2\ell}(1) = r\ell+1$.
\end{corollary}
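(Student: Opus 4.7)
The plan is to prove this by straightforward induction on $\ell$, using Lemma~\ref{lemma:sublinear_k_2_step_walk} as the engine for the inductive step. The base case $\ell = 0$ is immediate from the definition: $\Walk{A}{0}(1) = 1 = r \cdot 0 + 1$.

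For the inductive step, I assume that the hypothesis holds for some $\ell$ with $0 \le \ell < m$ and that $(y_i)_{x_i} = 0$ for all $1 \le i \le \ell+1$, so in particular $\Walk{A}{2\ell}(1) = r\ell + 1$. Before applying Lemma~\ref{lemma:sublinear_k_2_step_walk}, I first need the small composition fact that walks of even length compose, namely
\[
    \Walk{A}{2\ell + 2}(1) \;=\; \Walk{A}{2}\bigl(\Walk{A}{2\ell}(1)\bigr).
\]
This follows by unfolding the mutual recursion of Definition~\ref{def:walks} twice: $\Walk{A}{2\ell+2}(1) = \Walk{B}{2\ell+1}(N_A(1)) = \Walk{A}{2\ell}(N_B(N_A(1)))$, and $N_B(N_A(1)) = \Walk{A}{2}(1)$; iterating the same unfolding (or an auxiliary induction on $\ell$) yields the displayed identity.

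With composition in hand, I apply Lemma~\ref{lemma:sublinear_k_2_step_walk} with $i = \ell + 1$, noting that $r((\ell+1) - 1) + 1 = r\ell + 1 = \Walk{A}{2\ell}(1)$. Since $(y_{\ell+1})_{x_{\ell+1}} = 0$ by assumption, the lemma gives $\Walk{A}{2}(r\ell + 1) = r(\ell+1) + 1$, completing the induction.

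I do not expect any real obstacle here: the only nontrivial point is the composition identity for walks, and that is a routine consequence of Definition~\ref{def:walks}. Everything else is a direct appeal to Lemma~\ref{lemma:sublinear_k_2_step_walk}.
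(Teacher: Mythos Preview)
Your proof is correct and follows exactly the route the paper intends: induction on $\ell$ with Lemma~\ref{lemma:sublinear_k_2_step_walk} driving the inductive step. The paper's own argument is only the one-line remark ``by induction on the length of the walk, we can immediately deduce from Lemma~\ref{lemma:sublinear_k_2_step_walk}'', so you have simply made explicit the composition identity $\Walk{A}{2\ell+2}(1)=\Walk{A}{2}\bigl(\Walk{A}{2\ell}(1)\bigr)$ that the paper leaves tacit.
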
 

The correctness of the reduction follows:
\begin{corollary} \label{cor:reduction_for_sublinear_k}   
    If $\ORIndex(x,y) = 0$, then $\Walk{A}{k}(1)=\disj$, and otherwise $\Walk{A}{k}(1)=\intersect$. 
\end{corollary}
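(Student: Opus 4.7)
The plan is to do a case analysis on the value of $\ORIndex(x,y)$, combining Lemma~\ref{lemma:sublinear_k_2_step_walk} and Corollary~\ref{cor:sublinear_k_2_many_step_disjoint_walk} with the key observation that both $\disj$ and $\intersect$ are fixed points of $N_A$ and $N_B$ by construction, so once the walk reaches either of them it remains there forever.

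First, suppose $\ORIndex(x,y) = 0$, so that $(y_i)_{x_i} = 0$ for every $i \in [m]$. I would apply Corollary~\ref{cor:sublinear_k_2_many_step_disjoint_walk} with $\ell = m$ to get $\Walk{A}{2m}(1) = rm+1 = \disj$. Since $k \geq 2m$ and $N_A(\disj) = N_B(\disj) = \disj$, the remaining $k-2m$ steps do not move the endpoint, yielding $\Walk{A}{k}(1) = \disj$.

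Next, suppose $\ORIndex(x,y) = 1$, and let $i^*$ be the smallest index in $[m]$ with $(y_{i^*})_{x_{i^*}} = 1$. By Corollary~\ref{cor:sublinear_k_2_many_step_disjoint_walk} applied with $\ell = i^* - 1$, the walk reaches $r(i^*-1)+1$ after $2(i^*-1)$ steps. Then, by Lemma~\ref{lemma:sublinear_k_2_step_walk} applied at gadget $i^*$, the next two steps send the walk to $\intersect$, so that $\Walk{A}{2i^*}(1) = \intersect$. Since $2i^* \leq 2m \leq k$ and $\intersect$ is likewise a fixed point of both $N_A$ and $N_B$, the remaining steps leave the endpoint unchanged and $\Walk{A}{k}(1) = \intersect$.

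The only mildly subtle point in either case is that I am implicitly using a walk-composition identity of the form $\Walk{A}{r+s}(v) = \Walk{A}{s}(\Walk{A}{r}(v))$ whenever $r$ is even (so that the alternation of $N_A$ and $N_B$ resumes correctly after the prefix); this follows by a straightforward induction from Definition~\ref{def:walks}, and I would either state it as a one-line observation beforehand or simply fold the argument into an induction on $k$ that peels off the fixed-point steps from the suffix. There is no real obstacle beyond bookkeeping: the gadget construction was designed precisely so that the two-step lemma plus self-loops mechanically track the disjunction.
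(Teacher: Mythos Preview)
Your proposal is correct and follows essentially the same approach as the paper: both case-split on $\ORIndex(x,y)$, invoke Corollary~\ref{cor:sublinear_k_2_many_step_disjoint_walk} with $\ell=m$ (resp.\ $\ell=i^*-1$), then use Lemma~\ref{lemma:sublinear_k_2_step_walk} in the $1$-case, and finish via the fixed-point property of $\disj$ and $\intersect$. Your version is in fact slightly more careful about the bookkeeping (making the walk-composition step and the inequality $2i^*\le 2m\le k$ explicit), but the underlying argument is identical.
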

\begin{proof}
    First, assume $\ORIndex(x,y)=0$.
    Then by Corollary~\ref{cor:sublinear_k_2_many_step_disjoint_walk} for $\ell = m$, we have that $\Walk{A}{m}(1)=rm+1=\disj$. Since $r \leq k$, 
  and $N_A(\disj) = N_B(\disj) = \disj$, the claim follows.

    Next, assume that $\ORIndex(x,y)=1$, and let $i \in [m]$ be the smallest index such that $\left(y_i\right)_{x_i} = 1$.
    Then by Corollary~\ref{cor:sublinear_k_2_many_step_disjoint_walk} for $\ell = i-1$, we have that $\Walk{A}{2(i-1)}(1)=ri+1$.
    Lemma~\ref{lemma:sublinear_k_2_step_walk} implies that $\Walk{A}{2(i-1)+2}(1) = \intersect$.
    We note that $2(i-1)+2 \leq m \leq k$.
    Since $N_A(\intersect) = N_B(\intersect) = \intersect$, we conclude that $\Walk{A}{k}(1) = \intersect$.
\end{proof}

Taking $r = \Theta(n/k)$ and $m = \Theta(n)$,
Theorem~\ref{thm:LB_sublinear_k} now follows from the correctness of the reduction (Corollary~\ref{cor:reduction_for_sublinear_k}),
together with the lower bound of Lemma~\ref{lem:lower_or_index}
on the communication complexity of $\ORIndex_{r,m}$.

\subsection{Lower Bound for Zero-Error Randomized Protocols}
\label{sec:zero_error}
In this section we prove that any zero-error
protocol solving the pointer chasing problem requires $\Omega(k \log\log{k})$ bits of communication in expectation. 
\begin{theorem}[Formal statement of Theorem~\ref{thm:zero_error}]
    Let $n,k \in \N$ be such that $4k \leq n$, and
    let $f:[n] \rightarrow \{0,1\}$ be non-trivial. Then the randomized zero-error communication complexity of $\PCF{f}{n}{k}$
    is $\Omega(k \log\log{k})$.
    \label{thm:kllk}
\end{theorem}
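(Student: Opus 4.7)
My plan is to prove Theorem~\ref{thm:kllk} by reducing $\Cycle{n'}$ to $\PCF{f}{n}{k}$ for $n' = \Theta(k)$ and inheriting the zero-error lower bound $\Omega(n' \log\log n')$ from Corollary~\ref{cor:cycle_complexity}. The crucial observation I will exploit is that in any $\Cycle{n'}$ instance $(P_A,P_B)$ on $U \cup W$, the alternating walk starting at $1 \in U$ traces out the cycle of $P_A \cup P_B$ containing $1$; this cycle has length $L = 2n'$ in the Hamiltonian case and $L \leq 2n'-2$ otherwise. The reduction therefore has to translate the length-of-cycle information, which is one bit $[L=2n']$, into the single vertex $\Walk{A}{k}(1)$ seen by the pointer-chasing protocol.

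The construction embeds $(P_A,P_B)$ into a pointer-chasing graph consisting of the cycle vertices $U \cup W$ together with a long \emph{tail gadget}, which is a single $2m$-cycle (with $m \geq k$) formed by vertices $a_0,\ldots,a_{m-1}$ on Alice's side and $b_0,\ldots,b_{m-1}$ on Bob's side. I will first reduce to a restricted variant of $\Cycle{n'}$ in which $P_B(w_0)=1$ for a fixed vertex $w_0$ and $P_A(1)\neq w_0$; this reduction (insert a new pair of vertices into an arbitrary $P_A$-edge) preserves Hamiltonicity and only changes $n'$ by an additive constant, so the $\Omega(n'\log\log n')$ bound carries over. In the pointer-chasing instance, $N_A$ and $N_B$ agree with $P_A$ and $P_B$ on the cycle vertices, except that $N_B(w_0)$ is redirected to $a_0$. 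The walk from $1$ therefore traces the cycle-through-$1$ for exactly $L$ steps, enters the tail at $a_0$ at step $L$, and thereafter moves around the tail cycle. Since $m \geq k$ the walk cannot wrap around the tail within $k$ steps, so the tail-position of $\Walk{A}{k}(1)$ is an injective function of $L$. For a suitable choice of $n' = \Theta(k)$ (with parity adjusted to match $k$), the Hamiltonian case puts $\Walk{A}{k}(1)$ at a single fixed vertex $v_H$, while the non-Hamiltonian cases put it in a disjoint set $\{v_L : L \in \{4,6,\ldots,2n'-2\}\}$ of at most $n'$ vertices. Using the non-triviality of $f$ together with the slack $n \geq 4k$, I will label the vertices so that $f(v_H) \neq f(v_L)$ for every possible non-Hamiltonian $L$: pick $v_H$ from the smaller $f$-class on $[n]\setminus\{1\}$ (non-empty by non-triviality) and the $v_L$'s from the larger class, which has at least $(n-1)/2 \geq 2k - 1$ elements, ample room for the $O(k)$ labels we need.

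The main obstacle is the parity of the bipartite cycle. Because $L$ is always even, a na\"\i ve short absorbing tail (e.g.\ a $2$-cycle between $a_0$ and $b_0$) would put the walk on the same tail vertex in both Hamiltonian and non-Hamiltonian cases at step $k$, since $k-L$ would have a fixed parity; the long non-wrapping tail of length at least $2k$, together with picking the parity of $n'$ to match $k$, is exactly what breaks this collision and makes $v_H$ distinct from every $v_L$. A secondary subtlety is that $f$ is an \emph{arbitrary} non-trivial function, possibly extremely imbalanced (e.g.\ with a unique $0$-labeled vertex outside $\{1\}$); the hypothesis $4k \leq n$ is used precisely to guarantee that the larger $f^{-1}$-class is big enough to accommodate all the non-Hamiltonian tail labels. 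Once the labeling is fixed, the reduction is deterministic, so any zero-error protocol for $\PCF{f}{n}{k}$ immediately yields a zero-error protocol for the restricted $\Cycle{n'}$ with the same expected cost up to an additive constant, and the theorem follows.
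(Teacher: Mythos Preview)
Your approach is correct and takes a genuinely different route from the paper. The paper's proof also reduces from $\Cycle{}$, but instead of a tail gadget it first passes to $\Cycle{k'}$ for a \emph{prime} $k' \in (k,2k]$ (via Bertrand's postulate, by splicing $2(k'-k)$ fresh vertices into the cycle through $u_1$), and then embeds $(P_A,P_B)$ directly as $(N_A,N_B)$ with no extra structure. The point of primality is that a $2k'$-step walk from $u_1$ returns to $u_1$ iff the cycle through $u_1$ has length dividing $2k'$, i.e.\ length $2$ or $2k'$; the length-$2$ case is eliminated by a $O(\log k')$-bit pre-check, so ``walk returns to $1$'' becomes equivalent to Hamiltonicity, and the normalization $f(1)=0$, $f(i)=1$ for $2\le i\le 2k$ finishes. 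Your construction trades the number-theoretic trick for a combinatorial one: by redirecting the $P_B$-edge into $u_1$ toward a long non-wrapping tail, the endpoint at step $k$ records $k-L$ injectively, and the labeling freedom (using $n\ge 4k$) separates the single Hamiltonian endpoint from the $O(k)$ non-Hamiltonian ones via $f$. Your route avoids Bertrand's postulate and replaces the length-$2$ pre-check by the vertex-insertion step; the paper's route uses no gadget at all beyond the matchings themselves. Two small remarks: the ``parity of $n'$'' clause is not actually needed---since every cycle length $L$ is even, $k-L$ has the same parity for all inputs, and the long tail alone gives the required injectivity; and you do not need $w_0$ to be fixed in advance, only that Bob knows $w_0=P_B^{-1}(1)$, which suffices for him to set $N_B(w_0)=a_0$ without any communication.
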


It is convenient to assume that $f$ has the following properties.

\begin{observation}
\label{prop:zero_error_func}
Without loss of generality, we may assume that
    \begin{enumerate}[(a)]
        \item $|\{i \in [n] \mid f(i) = 0\}| \leq |\{i \in [n] \mid f(i) = 1\}|$,
        and 
        \item $f(1) = 0$, and $f(i) = 1$ for all $2 \leq i \leq 2k$.
    \end{enumerate}
\end{observation}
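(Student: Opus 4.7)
The plan is to establish (a) by complementing $f$ and (b) by a vertex-relabeling argument, combining both transformations when necessary.

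For (a), if $|\{i : f(i) = 0\}| > |\{i : f(i) = 1\}|$, I would replace $f$ by its complement $\bar{f} \coloneq 1 - f$. Any zero-error protocol for $\PCF{f}{n}{k}$ yields one for $\PCF{\bar{f}}{n}{k}$ of identical communication cost by simply negating the single output bit. This substitution preserves non-triviality (the image of $\bar{f}$ is the same set of bits), so we may assume (a) WLOG.

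For (b), given (a) we have $|\{i : f(i) = 1\}| \geq \lceil n/2 \rceil \geq 2k$ using $4k \leq n$. The idea is to relabel the vertices via a permutation $\sigma:[n]\to[n]$ with $\sigma(1)=1$, so that $f^* \coloneq f \circ \sigma$ satisfies (b). A protocol for $\PCF{f}{n}{k}$ is then used to solve $\PCF{f^*}{n}{k}$ as follows: given inputs $(N_A,N_B)$ for the latter, Alice and Bob locally form $\tilde N_X \coloneq \sigma^{-1}\circ N_X \circ \sigma$ for $X\in\{A,B\}$, run the protocol for $\PCF{f}{n}{k}$ on $(\tilde N_A,\tilde N_B)$, and output its bit. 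A straightforward induction on $k$ using $\sigma(1)=1$ yields $\Walk{A}{k}^{(\tilde N_A,\tilde N_B)}(1) = \sigma^{-1}\big(\Walk{A}{k}^{(N_A,N_B)}(1)\big)$, and hence $f$ at this point equals $f^*$ at the endpoint of the original walk. Thus this reduction preserves communication and zero-error correctness.

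To construct $\sigma$, first assume $f(1)=0$ (if instead $f(1)=1$, I would complement $f$ once more as in (a), absorbing the extra output flip). Let $S \coloneq \{i\in[n]\setminus\{1\} : f(i)=1\}$; since $f(1)=0$, all vertices counted by (a) lie in $S$, so $|S|\geq 2k$. Pick distinct $j_2,\ldots,j_{2k}\in S$ and let $\sigma$ be any permutation with $\sigma(1)=1$ and $\sigma(i)=j_i$ for $2\leq i\leq 2k$. Then $f^*(1)=0$, $f^*(i)=f(j_i)=1$ for $2\leq i\leq 2k$, giving (b); and permutations preserve value counts, so (a) also remains true for $f^*$.

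The delicate point I expect to be the main obstacle is the interaction between the two conditions when $f(1)=1$: enforcing $f(1)=0$ by a further complementation flips the direction of inequality in (a). The plan is to show that the slack $4k\leq n$ provides enough vertices of \emph{both} values (at least $1$ by non-triviality, and $\geq 2k$ on the majority side by (a)) so that one of the two options -- apply $\sigma$ directly to $f$, or first complement and then apply $\sigma$ -- produces an $f^*$ satisfying both (a) and (b) simultaneously. The reduction above then transfers lower bounds from $\PCF{f^*}{n}{k}$ to $\PCF{f}{n}{k}$ with no additional communication.
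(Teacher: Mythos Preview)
Your treatment of (a) matches the paper's. Your argument for (b), however, has a genuine gap that you flag as ``the main obstacle'' but do not actually resolve. Because you insist on $\sigma(1)=1$, you always have $f^*(1)=f(1)$; the only way to change the value at vertex~$1$ is complementation. Hence if $f(1)=0$ you are forced to work with $f$ itself (and need $\ge 2k-1$ ones among $[n]\setminus\{1\}$), while if $f(1)=1$ you are forced to work with $\bar f$ (and need $\ge 2k-1$ zeros of $f$ among $[n]\setminus\{1\}$). Neither is guaranteed. Concretely, take $n=4k$, $f(2)=1$, and $f(i)=0$ for all $i\neq 2$: this $f$ is non-trivial and $f(1)=0$, yet there is only a single index with value~$1$, so no permutation fixing~$1$ can achieve $f^*(i)=1$ for all $2\le i\le 2k$; complementing gives $\bar f(1)=1$, which also fails. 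Your proposed dichotomy (``one of the two options works'') is therefore false in this case.

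The paper sidesteps this by \emph{not} insisting that the starting vertex stay fixed: since $f$ is non-trivial there is some $j\in[n]\setminus\{1\}$ with $f(j)=0$, and the paper spends one extra step of the walk to move from vertex~$1$ to vertex~$j$, thereby shifting the effective start to a vertex where $f$ already vanishes. After this (harmless, constant) change in $k$, property~(a) supplies $\ge 2k$ ones which can be permuted into positions $2,\ldots,2k$. The extra step is exactly what breaks the constraint $\sigma(1)=1$ that your approach is stuck with. As a separate minor point, your conjugation is inverted: with $f^*=f\circ\sigma$ and $\sigma(1)=1$ you need $\tilde N_X=\sigma\circ N_X\circ\sigma^{-1}$ to get $f(\Walk^{\tilde N}(1))=f^*(\Walk^N(1))$; this is easily fixed and is not the real obstacle.
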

\begin{proof}
    Property (a) is obtained by relabeling the output values so that $1$ becomes the majority value.
    Given property (a), there are at least $2k$ indices such that $f(i) = 1$, and since $f$ is non-trivial, there exists an index $j \in [n] \setminus \{1\}$ such that $f(j) = 0$. Then, using a single extra step, it is easy to see we can set $f(1) = 0$ (by using the first step to move to vertex $j$), and $f(i) = 1$ for $2 \leq i \leq 2k$ (by reordering the indices).
\end{proof}

\subsubsection{Proof Overview.}
We show that for any integer $k$, we can reduce the $\Cycle{k}$ problem to the pointer-chasing problem $\PCF{f}{n}{2k'}$ for some $k < k' \leq 2k$ and for any $n \geq 4k$. We do so by first reducing $\Cycle{k}$ to $\Cycle{k'}$ for a prime $k < k' \leq 2k$, and then reducing $\Cycle{k'}$ to $\PCF{f}{n}{2k'}$.

In any input to $\Cycle{n}$,
each node lies on a single cycle.
Thus, the first reduction can be trivially achieved by 
adding $2(k' - k)$ fresh vertices
to the graph
to artificially extend the cycle on which  $u_1$ lies, so that its length is increased by 
 $2(k'-k)$.
 As a result, if there is a Hamiltonian cycle, its length now becomes $2k + 2(k' - k) = 2k'$ (recall that the original bipartite graph is over $2k$ vertices, $k$ on each side);
 and if there is no Hamiltonian cycle, extending the cycle on which $u_1$ lies by adding fresh vertices will not create one.

 The second reduction is done by constructing a pointer chasing instance that outputs $0$ if a $2k'$-walk starting from $u_1$ 
 returns to $u_1$, and outputs $1$ otherwise. We then prove that since $k'$ is prime, a $2k'$-walk from $u_1$ terminates at $u_1$ if and only if the cycle is Hamiltonian (excluding a case where $u_i$'s cycle is of length $2$, which can be handled trivially), and the theorem follows. As is clear from the formal definition of the reduction below, the reduction itself does not require the players to communicate.

\subsubsection{The Reduction.}
Let $U = \{u_1,\dots,u_k\}, W = \{w_1,\dots,w_k\}$. Alice and Bob are given perfect matchings $P_A,P_B$ between $U$ and $W$, respectively. For $i \in [k]$, we denote by $P_A(i)$  the index of the unique vertex $w_j$ neighboring $u_i$, i.e., $P_A(i) = j$ if $\{u_i,w_j\} \in P_A$. Similarly, for $w \in W$, we denote $P_B(j) = i$ if $\{u_i,w_j\} \in P_B$. Similar to the pointer chasing problem, we define a recursive notion of walk on the two matchings: for every $i \in [k]$ we denote

    \begin{equation*}
        \CycleWalk{A}{r}(i)
        =
        \begin{cases}
            i                      & r = 0 \\
            \CycleWalk{B}{r-1}(P_A(i))  & r > 0,
        \end{cases}
    \end{equation*}

    and symmetrically:
    \begin{equation*}
        \CycleWalk{B}{r}(i)
        =
        \begin{cases}
            i                      & r = 0 \\
            \CycleWalk{A}{r-1}(P_B(i))  & r > 0.
        \end{cases}
    \end{equation*}

First, we reduce $\Cycle{k}$ to $\Cycle{k'}$, where $k'$ is the closest prime from above to $k$, i.e., $k < k' \leq 2k$. We do so by artificially increasing the length of the cycle on which $u_1$ lies by $2(k'-k)-1$ edges. More formally: 
\begin{itemize}
    \item We add new vertices $x_1,\dots,x_{k'-k},y_1,\dots,y_{k'-k}$ to obtain new vertex sets, of size $k'$ each:
    \begin{align*}
        & U' = \{u_1,\dots,u_k\} \cup \{x_1,x_2,\dots,x_{k'-k}\},
        \\
        & W' = \{u_1,\dots,u_n\} \cup \{y_1,y_2,\dots,y_{k'-k}\}.
    \end{align*}
    \item Alice removes  the edge $\{u_1,w_{P_A(1)}\}$ from $P_A$.
    \item We add the path  $(u_1,y_1,x_1,y_2,x_2,\dots,x_{k'-k},w_{P_A(1)})$ to the graph, by adding the odd edges to $P_A$ and the even edges to $P_B$.
    
\end{itemize} 
The new graph is 
  a Hamiltonian cycle if and only if the original graph is a Hamiltonian cycle. Moreover, since we added $k'-k$ new vertices to each of the sets $U$, $W$
  to obtain $U', W'$,
  the result is indeed an instance of $\Cycle{k'}$. This concludes the reduction from $\Cycle{k}$
  to $\Cycle{k'}$; observe that the reduction can be performed   by Alice and Bob independently, with no communication.

Next, we reduce from $\Cycle{k'}$ to $\PCF{f}{n}{2k'}$ for any $n \geq 2k'$. 
The players start by checking if the cycle that contains $u_1$ is of length $2$. In order to do so, Alice first sends $P_A(1)$, and Bob sends $P_B(P_A(1))$, using $2 \lceil \log k' \rceil$ of communication in total.
If $P_B(P_A(1)) = 1$, the players conclude that $P_A \cup P_B$ is not a Hamiltonian cycle.
Otherwise, the players can conclude that the cycle containing $u_1$ is of length more than $2$.

Next, for $1 \leq i \leq k'$, define $N_A(i) = P_A(i)$ and $N_B(i) = P_B(i)$. For any $i > k'$, define $N_A(i),N_B(i)$ arbitrarily. In the next two lemmas, we show that the value of this instance of $\PCF{f}{n}{2k'}$ is equal to the value of the $\Cycle{k'}$ problem, and hence to the original $\Cycle{k}$ problem as well.

\begin{lemma}
\label{lem:cycle_to_normal_walk}
    For any $r \in \N \cup \{0\}$ we have $\CycleWalk{A}{r}(1) = \Walk{A}{r}(1)$.
\end{lemma}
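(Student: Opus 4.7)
The plan is to prove this by a straightforward mutual induction on $r$, but strengthened so that it applies to both walk functions simultaneously and starting from any vertex in $[k']$, not only from vertex $1$. The key structural fact I will use is that $P_A$ and $P_B$ are matchings between sets of size $k'$, so $P_A(i) \in [k']$ and $P_B(i) \in [k']$ whenever $i \in [k']$; combined with the definitions $N_A(i) = P_A(i)$ and $N_B(i) = P_B(i)$ for $i \in [k']$, this means that walks that start inside $[k']$ never escape $[k']$, and therefore the arbitrary choice of $N_A, N_B$ outside $[k']$ is never consulted.

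Concretely, I would prove by induction on $r \ge 0$ the following strengthened claim: for every $i \in [k']$,
\begin{equation*}
    \CycleWalk{A}{r}(i) = \Walk{A}{r}(i) \in [k']
    \quad \text{and} \quad
    \CycleWalk{B}{r}(i) = \Walk{B}{r}(i) \in [k'].
\end{equation*}
The base case $r=0$ is immediate from the definitions, since both walks return $i$. For the inductive step, fix $i \in [k']$. By the definitions,
$\CycleWalk{A}{r}(i) = \CycleWalk{B}{r-1}(P_A(i))$ and $\Walk{A}{r}(i) = \Walk{B}{r-1}(N_A(i))$.
Since $i \in [k']$, we have $N_A(i) = P_A(i)$, and since $P_A$ is a matching into $W' = [k']$ (after identification), $P_A(i) \in [k']$. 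Applying the induction hypothesis at $r-1$ to the index $P_A(i) \in [k']$ yields $\CycleWalk{B}{r-1}(P_A(i)) = \Walk{B}{r-1}(P_A(i)) \in [k']$, which gives the desired equality on the $A$ side. The $B$ side is entirely symmetric, swapping the roles of $A$ and $B$.

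Taking $i = 1$ in the strengthened claim yields the lemma. I do not foresee any real obstacle: the whole argument reduces to bookkeeping the invariant that walks remain in $[k']$, which is exactly where $N_A$ agrees with $P_A$ and $N_B$ agrees with $P_B$. The only mildly subtle point is to strengthen the statement to every $i \in [k']$ (not just $i=1$) before inducting, so that the inductive hypothesis can be applied to the intermediate vertex reached after one step.
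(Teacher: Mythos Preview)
Your proof is correct and, at heart, uses the same idea as the paper: induct on $r$ and exploit that the walk never leaves $[k']$, where $N_A,N_B$ agree with $P_A,P_B$. The technical setup differs slightly. You peel off the \emph{first} step, writing $\Walk{A}{r}(i) = \Walk{B}{r-1}(N_A(i))$, which forces you to strengthen the hypothesis to all $i \in [k']$ and to both walk functions so that it applies at the intermediate vertex $N_A(i)$. The paper instead peels off the \emph{last} step, writing $\Walk{A}{r}(1) = N_B(\Walk{A}{r-1}(1))$ for even $r$ (and $N_A$ for odd $r$); since the starting vertex is always $1$ and $\CycleWalk{A}{r-1}(1)$ trivially lies in $[k']$, the induction goes through on the bare statement with no strengthening. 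Your version is a bit more general (and arguably cleaner, since it matches the recursive definitions verbatim), while the paper's is minimally tailored to the lemma as stated; both are equally valid.
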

\begin{proof}
    By induction on $r$. For $r = 0$, $\CycleWalk{A}{0}(1) = \Walk{A}{0}(1) = 1$. Assume by induction on $r$ that $\CycleWalk{A}{r-1}(1) = \Walk{A}{r-1}(1)$. For even $r$, 
    \[\CycleWalk{A}{r}(1) = P_B(\CycleWalk{A}{r-1}(1)),\text{ and }\Walk{A}{r}(1) = N_B(\Walk{A}{r-1}(1)).\] Since $\CycleWalk{A}{r-1}(1) = \Walk{A}{r-1}(1)$, then in particular $\Walk{A}{r-1}(1) \in [k']$. By choice of $N_B$, $P_B(i) = N_B(i)$ for all $1 \leq i \leq k'$, and the claim follows. The case for odd $r$ follows using a similar argument.
\end{proof}

\begin{lemma}
    $P_A \cup P_B$ is a Hamiltonian cycle on $U' \cup W'$ if and only if $f(\Walk{A}{2k'}(1)) = 0$. 
\end{lemma}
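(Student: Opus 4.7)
The plan is to combine Lemma~\ref{lem:cycle_to_normal_walk} with the structural properties of $f$ guaranteed by Observation~\ref{prop:zero_error_func}, and then reduce everything to a divisibility statement about the length of the cycle containing $u_1$ in $P_A\cup P_B$. The primality of $k'$ (arranged by the first part of the reduction) will do the heavy lifting.

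First, I would observe that $\CycleWalk{A}{r}(1)$ simply traces the cycle of $P_A\cup P_B$ that contains $u_1$: if this cycle has length $\ell$, then $\CycleWalk{A}{r}(1) = 1$ exactly when $\ell \mid r$, and otherwise $\CycleWalk{A}{r}(1) \neq 1$. By Lemma~\ref{lem:cycle_to_normal_walk} the pointer-chasing walk agrees with the cycle walk, so $\Walk{A}{2k'}(1) = \CycleWalk{A}{2k'}(1)$, and this value lies in $[k']\subseteq [2k]$. Combining this with Observation~\ref{prop:zero_error_func} (which gives $f(1)=0$ and $f(i)=1$ for all $2\le i\le 2k$), the claim $f(\Walk{A}{2k'}(1))=0$ is equivalent to $\CycleWalk{A}{2k'}(1)=1$, i.e., to $\ell \mid 2k'$.

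For the forward direction, if $P_A\cup P_B$ is a Hamiltonian cycle on $U'\cup W'$, then $u_1$ lies on a cycle of length exactly $|U'|+|W'|=2k'$, so $\ell=2k'$, $\ell\mid 2k'$ trivially, and hence $f(\Walk{A}{2k'}(1))=f(1)=0$. For the reverse direction, assume $\ell\mid 2k'$. Since the graph is bipartite, $\ell$ is even, so write $\ell=2m$ with $m\mid k'$. Because $k'$ is prime, $m\in\{1,k'\}$. The case $m=1$ corresponds to a length-$2$ cycle $u_1\text{--}w_{P_A(1)}\text{--}u_1$, which was explicitly ruled out by the preliminary check $P_B(P_A(1))\neq 1$ performed by the players. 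Therefore $m=k'$ and $\ell=2k'$, meaning the cycle through $u_1$ spans all $2k'$ vertices of $U'\cup W'$, so $P_A\cup P_B$ is a Hamiltonian cycle.

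The only subtle ingredient is the primality of $k'$, which is precisely what makes $\ell\mid 2k'$ force $\ell\in\{2,2k'\}$; without it, intermediate divisors of $k'$ could yield non-Hamiltonian unions that nonetheless bring the walk back to $u_1$ after $2k'$ steps. This is also why the $\ell=2$ case must be handled separately by the preliminary two-message exchange: primality alone cannot exclude the bipartite "edge cycle" of length $2$. I do not expect any further obstacle; the argument is essentially a divisibility computation once the reduction, the walk-equivalence lemma, and the properties of $f$ are in place.
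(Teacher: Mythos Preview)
Your proposal is correct and follows essentially the same approach as the paper: both reduce the question to whether the cycle through $u_1$ has length dividing $2k'$, invoke Lemma~\ref{lem:cycle_to_normal_walk} to equate the two walks, use Observation~\ref{prop:zero_error_func} to translate $f(\cdot)=0$ into ``the walk ends at vertex $1$'', and then use the primality of $k'$ together with the length-$2$ check to conclude. Your divisibility framing is slightly more explicit than the paper's, but the argument is the same.
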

\begin{proof}
    If $P_A \cup P_B$ is a Hamiltonian cycle (i.e., a cycle of length $2k'$), then a $2k'$-walk from $u_1$ terminates at $u_1$, or in other words $\CycleWalk{A}{2k'}(1) = 1$. By Lemma~\ref{lem:cycle_to_normal_walk}, it holds that $\Walk{A}{2k'}(1) = \CycleWalk{A}{2k'}(1) = 1$,
    and therefore by Property~\ref{prop:zero_error_func}
    we have $f(\Walk{A}{2k'}(1)) = 0$.

    If $P_A \cup P_B$ is not a Hamiltonian cycle, then the cycle containing $u_1$ is of length $2\ell$ for some $1 < \ell < k'$.
    Since $k'$ is a prime number,  $\ell$ does not divide $k'$, implying that a $2k'$-walk from $u_1$ does not terminate at $u_1$. In other words, $\CycleWalk{A}{2k'}(1) \in [k'] \setminus \{1\}$. By Lemma~\ref{lem:cycle_to_normal_walk}, it holds that $\Walk{A}{2k'}(1) = \CycleWalk{A}{2k'}(1) \in [k'] \setminus \{1\}$, and therefore by Property~\ref{prop:zero_error_func}, $f(\Walk{A}{2k'}(1)) = 1$.

\end{proof}

This completes the correctness of the reduction from $\Cycle{k'}$ to $\PCF{f}{n}{2k'}$. Theorem~\ref{thm:kllk} immediately follows, by Lemma~\ref{cor:cycle_complexity}.

\bibliographystyle{splncs04}
\bibliography{related_work_refs}

\appendix

\end{document}